\documentclass{article}
\usepackage{graphicx} % Required for inserting images
\usepackage{amsfonts,amsthm}
\usepackage{amsmath,mathtools}   
\usepackage{latexsym}
\usepackage{amssymb}
\usepackage{mathrsfs}
\usepackage{pifont}
\usepackage{tikz,ulem}
\usepackage[all]{xy}
\usepackage{hyperref}
\usepackage{tikz}
\title{Algebraic Leonard trio approach to rational functions: the Hahn case}
\newtheorem{theo}{Theorem}[section]

\newtheorem{rem}{Remark}[section]

\newtheorem{prop}[theo]{Proposition}

\newtheorem{defi}[theo]{Definition}

\def \oV{{\widetilde{V}}}
\def\V{{\mathbb{V}}}
\newcommand{\ox}{\mathrm{x}}

\date{April 2026}

\begin{document}
\author{
Nicolas Cramp\'e\textsuperscript{$1,2$}\footnote{E-mail: crampe1977@gmail.com}~,
Quentin Labriet\textsuperscript{$3$}\footnote{E-mail: quentin.labriet@umontreal.ca}~,
Lucia Morey\textsuperscript{$3$}\footnote{E-mail: lucia.morey@umontreal.ca}~,\\
Luc Vinet\textsuperscript{$3$}\footnote{E-mail: luc.vinet@umontreal.ca}~,
\vspace{0.5cm}\\
\textsuperscript{$1$}
\small Institut Denis-Poisson CNRS/UMR 7013 - Universit\'e de Tours \\
\small Parc de Grammont, 37200 Tours, France.\vspace{0.2cm}\\
\textsuperscript{$2$}
\small 
Laboratoire d’Annecy de Physique Th\'eorique, 9 Chemin de Bellevue \\
\small  BP 110 - Annecy-le-Vieux -
F-74941 Annecy Cedex - France.\vspace{0.2cm} \\
 \textsuperscript{$3$}
\small Centre de Recherches Math\'ematiques, Universit\'e de Montr\'eal, P.O. Box 6128, \\
\small Centre-ville Station, Montr\'eal (Qu\'ebec), H3C 3J7, Canada.}
\maketitle

\begin{center}
\begin{minipage}{12cm}
\begin{center}
{\bf Abstract}\\
\end{center}  
The finite families of Hahn polynomials and associated biorthogonal rational functions are interpreted algebraically in the framework of Leonard trios. We introduce the trio Hahn algebra  and prove that it is isomorphic to the meta Hahn algebra, thereby clarifying the structural connection between Leonard trios and meta algebras. Finite dimensional realizations in terms of difference operators are constructed, and the functions of interest arise as overlaps between eigensolutions of ordinary  eigenvalue problems. Their bispectral and biorthogonality properties follow naturally from the algebraic framework. 
\end{minipage}
\end{center}

\medskip

\section{Introduction}

Orthogonal polynomial families in the ($q$-)Askey scheme \cite{Koekoek} exhibit a characteristic bispectral structure: they satisfy both a three-term recurrence relation and a differential or difference equation, each of which can be interpreted as an eigenvalue equation for suitable linear operators acting either on the variable or on the polynomial degree. These operators generate the Askey--Wilson algebra, or one of its various degenerations and specializations \cite{zhedanov1991hidden}. Within this setting, the orthogonal polynomials arise as transition coefficients between eigenbases associated with two generators of the algebra in an irreducible representation. In the finite-dimensional case, the corresponding operators form a Leonard pair \cite{TERWILLIGER2001}, a structure known to classify the terminating families in the ($q$-)Askey scheme.

An algebraic approach to biorthogonal rational functions has been developed for the Hahn type \cite{VINET2021124863,Tsujimoto_Vinet_Zhedanov_2025} and the $q$-Hahn type \cite{bussiere2022bispectrality,bernard2024meta}, and was subsequently extended to the Racah type \cite{MetaRacah} and Wilson rational functions \cite{CTVZ25}. These works led to the introduction of meta algebras. The finite-dimensional representation theory of the meta ($q$-)Hahn and Racah algebras was established in \cite{Tsujimoto_Vinet_Zhedanov_2025,bernard2024meta,MetaRacah}, where purely algebraic derivations of the defining relations for the corresponding rational functions were obtained.

More recently, the notion of Leonard trios was introduced \cite{LT2026} and  their connection with biorthogonal rational functions was established. This perspective provides a new algebraic setting for the study of these functions and opens several promising directions for future research. It also offers an alternative viewpoint to the meta algebra program while retaining a number of common structural features. A classification of Leonard trios could potentially lead to a complete scheme for bispectral rational functions analogous to the ($q$-)Askey scheme. Such a classification would likely require determining the algebraic relations satisfied by the operators $V$, $\oV$, and $Z$ forming a Leonard trio (see Definition \ref{def:LT} below). In this direction, one naturally expects a generalization of the Askey--Wilson relations, which characterize the interplay between the two elements of a Leonard pair $(X,Y)$. The present work constitutes a first step toward the development of such a theory.

The paper is organized as follows. In Section~\ref{sec:algebras}, we introduce the algebraic structures associated with Hahn rational functions. In particular, we show how the meta Hahn algebra gives rise to the trio Hahn algebra, namely the algebra associated with the corresponding Leonard trio, and prove that these two algebras are isomorphic. In Section~\ref{sec:realization}, we provide a realization of these algebras in terms of difference operators and compute the corresponding (generalized) eigenvectors. In Section~\ref{sec:hahn-polynomials}, we explain how Hahn polynomials and their properties arise naturally within this framework. Finally, in Section~\ref{sec:rational-functions}, we study the associated biorthogonal rational functions and analyze their main properties.

\section{Algebraic structures for Hahn rational functions\label{sec:algebras}}

Throughout this paper, we use the notation  $[A,B]=AB-BA$ and $\{A,B\}=AB+BA$  for the commutator and the anti-commutator, respectively.

\subsection{Meta Hahn algebra}
The meta Hahn algebra was introduced in \cite{VINET2021124863,Tsujimoto_Vinet_Zhedanov_2025}, and finite families of biorthogonal rational functions and Hahn type orthogonal polynomials were interpreted algebraically by considering its finite-dimensional representations. We recall its definition for the reader’s convenience (see \cite[Def. 3.1]{Tsujimoto_Vinet_Zhedanov_2025}). 
\begin{defi}
The meta Hahn algebra $m\mathfrak{H}$ is the associative algebra with unit $I$, generators $X$, $V$, $Z$ and $\eta$, $\xi$ central elements obeying the defining relations
\begin{align}
&[Z,X]=Z^2+Z\label{eq:rel-meta1}\,,\\
&[V,Z]=2X+\eta \label{eq:rel-meta2}\,,\\
&[X,V]=\{V,Z\}+V+\xi \label{eq:rel-meta3}\,.
\end{align}
\end{defi}
The meta Hahn algebra possesses a Casimir element given by
\begin{align}\label{eq:Cas}
   Q=ZVZ+VZ+X^2-(I-\eta)X+\xi Z\,,
\end{align}
which commutes with the generators $ X$, $Z$ and $ V$. 
Next we discuss important connections between the meta Hahn algebra and some related algebras.
\paragraph{Jacobi algebra.} Let remark that $X$ can be expressed in terms of $V$ and $Z$ using relation \eqref{eq:rel-meta2}. Substituting this expression, one obtains an equivalent set of defining relations for the meta Hahn algebra involving only $V$ and $Z$:
\begin{align}
&[Z,[V,Z]]=2Z^2+2Z\,,\\
&[[V,Z],V]=2\{V,Z\}+2V+2\xi\,.
\end{align}
This algebra is named after Jacobi, as it is the bispectral algebra associated with Jacobi polynomials. Indeed, one can verify that the following linear map
\begin{align}
    Z &\mapsto (\ox-1)/2\,,\\
    V &\mapsto (1-\ox^2)\partial_x^2+(b-a-(a+b+2)\ox)\partial_x\,,
\end{align}
where $\ox$ is the operator multiplication by $x$ and $\partial_x$ is the partial derivative, 
is an algebra homomorphism (with $2\xi=(b^2-a^2)I$).

\paragraph{Hahn algebra.} The Hahn algebra embeds in the meta Hahn algebra. Setting 
\begin{equation}\label{eq:defK1K2}
    K_1 = X + \rho Z+\frac{1}{2}(\eta+\rho I)\,,\quad K_2 = V,
\end{equation}
with $\rho\in\mathbb{R}$, and using the relations \eqref{eq:rel-meta1}--\eqref{eq:rel-meta3} one see that $K_1$ and $K_2$ satisfy
\begin{align}
&[K_1, [K_2, K_1]]= 2K_1^2-K_2 -2Q-\frac{1}{2}(\eta^2+\rho^2I)+\eta-\xi\,,\\
&[[K_2, K_1], K_2]]= 2\{K_1, K_2\}  + 2\rho \xi\,. 
\end{align}
This algebra corresponds to the one realized by the bispectral operators of the Hahn polynomials. Let us notice that, in some previous papers, the Hahn algebra corresponds to a more general  algebra with more parameters which also encompasses the algebra defined in the previous paragraph.\\

As mentioned in the introduction, the meta algebra has been defined to provide an algebraic framework to the study of the bispectral properties of some rational functions. Indeed, for finite representations of the meta Hahn algebra just introduced, these functions appear as overlap coefficients between the bases $\{e_k\}$ and $\{d_\ell\}$ defined by the following (generalized) eigenvalue problems:
\begin{align}
    V e_k =\lambda_k e_k\,,\quad X d_\ell =\mu_\ell Z d_\ell\,.\label{eq:GEVP}
\end{align}
In order for the vectors $d_\ell$ defined by the second equation to define a basis, we assume that $Z$ is invertible, \textit{i.e.} that there exists a $Z^{-1}$ such that
\begin{equation}
ZZ^{-1}=Z^{-1}Z=I\,.
\end{equation}
We suppose that this holds throughout the rest of this paper.
We can then define $\widetilde{V}$ as 
\begin{align}
    \widetilde{V}=XZ^{-1},
\end{align}
which allows us to rewrite the generalized eigenvalue problem \eqref{eq:GEVP} as a standard eigenvalue problem
\begin{align}
\widetilde{V}\left(Zd_\ell\right)=\mu_\ell\left(Zd_\ell\right)\,.
\end{align}
It is then natural to focus on the three operators $V$, $\widetilde{V}$ and $Z$.
We shall show that these three operators, in finite representations, form a Leonard trio, according in \cite{LT2026} that is recalled in Subsection \ref{ssec:LT}. We show in the next subsection that there exists an algebra, called trio algebra, associated to these three generators.

\subsection{Trio Hahn algebra}
When $Z$ is invertible, the relations \eqref{eq:rel-meta1}–\eqref{eq:rel-meta3} of the meta Hahn algebra $m\mathfrak{H}$ yield additional relations. In particular, \eqref{eq:rel-meta1} implies
\begin{align}
&[Z,\widetilde{V}]=[Z,\widetilde{V}Z]Z^{-1}=[Z,X]Z^{-1}=(Z^2+Z)Z^{-1}=Z+I\label{eq:rel-mH-tH-4}\,,\\
&[\oV,Z^{-1}]=Z^{-1}[Z,\oV]Z^{-1}=Z^{-1}(Z+I)Z^{-1}=Z^{-1}+Z^{-2}\label{eq:rel-mH-tH-5}\,.
\end{align}
Equation \eqref{eq:rel-meta2} implies
\begin{align}
&[V,Z]=2\widetilde{V}Z+\eta=\{\widetilde{V},Z\}+[\widetilde{V},Z]+\eta=\{\widetilde{V},Z\}-Z+\eta-I\label{eq:rel-mH-tH-3}\,,\\
&[V,Z^{-1}]=Z^{-1}[Z,V]Z^{-1}=-\{\widetilde{V},Z^{-1}\}-(\eta-I)Z^{-2}+Z^{-1}\,.\label{eq:rel-mH-tH-6}
\end{align}
Equation \eqref{eq:Cas} implies
\begin{equation}
QZ^{-1}=ZV+V+\xi+\oV Z\oV+(\eta-I)\oV\,,
\end{equation}
and then using \eqref{eq:rel-meta3}
\begin{align}
[V,\widetilde{V}]&=[V,X]Z^{-1}+X[V,Z^{-1}]\nonumber\\
&=-(\{V,Z\}+V+\xi)Z^{-1}+\widetilde{V}Z(-\{\widetilde{V},Z^{-1}\}+(I-\eta)Z^{-2}+Z^{-1})\nonumber\\
&=-\widetilde{V}^2-V+\widetilde{V}-(ZV+V+\xi+\widetilde{V}Z\widetilde{V}+(\eta-I)\widetilde{V})Z^{-1}\nonumber\\
&=-\widetilde{V}^2-V+\widetilde{V}-Q Z^{-2}\,.\label{eq:rel-mH-tH-2}
\end{align}
These relations suggest to define the following algebra.
\begin{defi}
The trio Hahn $t\mathfrak{H}$ algebra is the associative algebra with unit $I$ and generators $\mathcal{V}, \mathcal{\oV},\mathcal{Z}, \mathcal{Z}^{-1}$ obeying the relations 
\begin{align}
&\mathcal{Z}^{-1}\mathcal{Z}=\mathcal{Z}\mathcal{Z}^{-1}=I\label{eq:rel-algebra-trio1}\,,\\
&[\mathcal{V},\mathcal{\widetilde{V}}]=-\mathcal{\widetilde{V}}^2-\mathcal{V}+\mathcal{\widetilde{V}}+\zeta \mathcal{Z}^{-2}\,,\label{eq:rel-algebra-trio2}\\
&[\mathcal{V},\mathcal{Z}]=\{\mathcal{\widetilde{V}},\mathcal{Z}\}-\mathcal{Z}+\eta-I\label{eq:rel-algebra-trio6}\,,\\
&[\mathcal{Z},\mathcal{\widetilde{V}}]=\mathcal{Z}+I\label{eq:rel-algebra-trio3}\,,
\end{align}
where $\zeta$ and $\eta$ are central elements.
\end{defi}
The trio Hahn algebra has a Casimir element given by
\begin{equation}
\label{eq:casimir-trios}
\mathcal{C}=\mathcal{Z}\mathcal{V}+\mathcal{V}+\zeta\mathcal{Z}^{-1}  +\mathcal{\widetilde{V}Z\widetilde{V}}+(\eta-I)\mathcal{\widetilde{V}}\,,
\end{equation}
which commutes with the generators $\mathcal{V}, \mathcal{\oV}, \mathcal{Z}$ and $\mathcal{Z}^{-1}.$ The relations of  $t\mathfrak{H}$ imply
\begin{align}
&[\mathcal{Z},\mathcal{\widetilde{V}Z}]=[\mathcal{Z},\mathcal{\widetilde{V}}]\mathcal{Z}=(\mathcal{Z}+I)\mathcal{Z}=\mathcal{Z}^2+\mathcal{Z}\,,\label{eq:rel-trios-meta-1}\\
 &[\mathcal{V},\mathcal{Z}]=\{\mathcal{\widetilde{V}},\mathcal{Z}\}-\mathcal{Z}+\eta-I=2\mathcal{\oV}\mathcal{Z}+\eta\,,\label{eq:rel-trios-meta-2}\\
&[\mathcal{V},\mathcal{\widetilde{V}Z}]=-\mathcal{VZ}+\zeta \mathcal{Z}^{-1}+\mathcal{\widetilde{V}Z\widetilde{V}}+(\eta-I)\mathcal{\widetilde{V}}=-\{\mathcal{V},\mathcal{Z}\}-\mathcal{V}+\mathcal{C}\label{eq:rel-trios-meta-3}\,.
\end{align}
\begin{rem}
Relation \eqref{eq:rel-algebra-trio3} is equivalent to 
$
[\mathcal{\widetilde{V}},\mathcal{Z}^{-1}]=\mathcal{Z}^{-2}+\mathcal{Z}^{-1}\,,
$
and relation \eqref{eq:rel-algebra-trio6} is equivalent to 
$
    [\mathcal{V},\mathcal{Z}^{-1}]=-\{\mathcal{\oV},\mathcal{Z}^{-1}\}-(\eta-I)\mathcal{Z}^{-2}+\mathcal{Z}^{-1}\,.
$
\end{rem}
\begin{prop}
    The meta Hahn algebra $m\mathfrak{H}^\star$ (with the star $^\star$ indicating that $m\mathfrak{H}$ is extended with the inverse operator $Z^{-1}$ satisfying $ZZ^{-1}=Z^{-1}Z=I$) and the trio Hahn algebra $t\mathfrak{H}$ are isomorphic. 
\end{prop}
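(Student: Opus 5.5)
We will construct two explicit algebra homomorphisms and check that they are mutually inverse. Since the central elements differ between the two presentations, the maps must also be specified on the centres. Define $\Phi: t\mathfrak{H}\to m\mathfrak{H}^\star$ on generators by
\begin{equation*}
\mathcal{V}\mapsto V,\quad \mathcal{\oV}\mapsto XZ^{-1},\quad \mathcal{Z}\mapsto Z,\quad \mathcal{Z}^{-1}\mapsto Z^{-1},\quad \eta\mapsto \eta,\quad \zeta\mapsto -Q .
\end{equation*}
In the other direction, define $\Psi: m\mathfrak{H}^\star\to t\mathfrak{H}$ by
\begin{equation*}
X\mapsto \mathcal{\oV}\mathcal{Z},\quad V\mapsto \mathcal{V},\quad Z\mapsto\mathcal{Z},\quad Z^{-1}\mapsto\mathcal{Z}^{-1},\quad \eta\mapsto\eta .
\end{equation*}
The image of $\xi$ is read off from \eqref{eq:rel-trios-meta-3} compared with \eqref{eq:rel-meta3}. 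That comparison suggests $\xi\mapsto -\mathcal{C}$, but \eqref{eq:rel-trios-meta-3} contains $-\mathcal{V}$ with the opposite sign to $+V$ in \eqref{eq:rel-meta3}. So we will either find a sign convention (e.g. replacing $\mathcal{C}$ by a shifted combination) or recheck \eqref{eq:rel-trios-meta-3}. In any case $\xi\mapsto$ (an affine expression in $\mathcal{C}$), which is central because $\mathcal{C}$ is the Casimir.

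\textbf{Well-definedness of $\Phi$.} Relation \eqref{eq:rel-algebra-trio1} is immediate. Relations \eqref{eq:rel-algebra-trio3}, \eqref{eq:rel-algebra-trio6} and \eqref{eq:rel-algebra-trio2} map respectively to \eqref{eq:rel-mH-tH-4}, \eqref{eq:rel-mH-tH-3} and \eqref{eq:rel-mH-tH-2}, all already derived in $m\mathfrak{H}^\star$. The image of $\zeta$ must be central, and it is, because $Q$ commutes with $X,V,Z$ and hence with $Z^{-1}$.

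\textbf{Well-definedness of $\Psi$.} Relations \eqref{eq:rel-meta1} and \eqref{eq:rel-meta2} are exactly \eqref{eq:rel-trios-meta-1} and \eqref{eq:rel-trios-meta-2}. Relation \eqref{eq:rel-meta3} follows from \eqref{eq:rel-trios-meta-3} together with the chosen image of $\xi$. The image of $\xi$ is central because $\mathcal{C}$ is a Casimir; we will verify this by commuting \eqref{eq:casimir-trios} with each generator using the defining relations. This step is routine but is the lengthiest computation.

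\textbf{Inverse property.} It suffices to check on generators and central elements.
\begin{itemize}
\item $\Psi\Phi(\mathcal{\oV})=\mathcal{\oV}\mathcal{Z}\mathcal{Z}^{-1}=\mathcal{\oV}$ and $\Phi\Psi(X)=XZ^{-1}Z=X$; the other generators are trivial.
\item On the centres we need $\Psi(-Q)=\zeta$ and $\Phi(\mathcal{C})$ to match the image of $\xi$. For the first, substitute $X=\mathcal{\oV}\mathcal{Z}$ into \eqref{eq:Cas}; the identity $QZ^{-1}=ZV+V+\xi+\oV Z\oV+(\eta-I)\oV$ shows that $\Psi(Q)\mathcal{Z}^{-1}$ equals $\mathcal{C}-\zeta\mathcal{Z}^{-1}$ plus the image of $\xi$. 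With the right choice of $\Psi(\xi)$, this gives $\Psi(Q)=-\zeta$. For the second, $\Phi(\mathcal{C})=QZ^{-1}-\xi+\ldots$ collapses to a multiple of $\xi$ by the same identity.
\end{itemize}

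The main obstacle is this consistent bookkeeping of the central elements, namely the pair $(\xi,Q)$ versus $(\zeta,\mathcal{C})$ together with the sign issue noted above. The plan is to fix $\Psi(\xi)$ so that \eqref{eq:rel-meta3} holds exactly. Then the Casimir identity $QZ^{-1}=\dots$ should force $\Psi(Q)=-\zeta$ automatically, giving the isomorphism.
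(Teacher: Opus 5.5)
Your strategy matches the paper's. The paper's $\varphi$ and $\psi$ are exactly your $\Psi$ and $\Phi$, with $\xi\mapsto-\mathcal C$ and $\zeta\mapsto-Q$. Your treatment of the central elements is, if anything, more careful than the paper's. You derive $\Psi(Q)=-\zeta$ and $\Phi(\mathcal C)=-\xi$ (exactly $-\xi$, not merely a multiple) from $QZ^{-1}=ZV+V+\xi+\oV Z\oV+(\eta-I)\oV$, and you plan to verify that $\mathcal C$ is central. The paper instead lists these images as part of the definition of the maps and asserts the centrality of $\mathcal C$ without proof.

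The one thing to fix is the ``sign issue'', which is a misreading of the order of the commutators. Relation \eqref{eq:rel-meta3} is written for $[X,V]$, while \eqref{eq:rel-trios-meta-3} is written for $[\mathcal V,\mathcal{\oV}\mathcal Z]=[\mathcal V,\Psi(X)]$, the opposite order. Rewrite \eqref{eq:rel-meta3} as $[V,X]=-\{V,Z\}-V-\xi$ and compare it with $[\mathcal V,\mathcal{\oV}\mathcal Z]=-\{\mathcal V,\mathcal Z\}-\mathcal V+\mathcal C$. The $\mathcal V$-terms already agree, and the match is exact precisely for $\Psi(\xi)=-\mathcal C$, with no shift needed.

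Your fallback would not have rescued a genuine mismatch anyway. A wrong coefficient on the non-central element $\mathcal V$ cannot be absorbed by an affine change of the central element $\Psi(\xi)$. So drop the hedge and commit to $\Psi(\xi)=-\mathcal C$.

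With that choice the rest of your plan goes through. $[\mathcal Z,\mathcal C]=0$ and $[\mathcal{\oV},\mathcal C]=0$ take a few lines using $\mathcal Z\mathcal{\oV}=\mathcal{\oV}\mathcal Z+\mathcal Z+I$. $[\mathcal V,\mathcal C]=0$ is the only lengthy check, and it does hold. Centrality with respect to $\mathcal Z^{-1}$ then follows from $[\mathcal Z,\mathcal C]=0$.
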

\begin{proof}
Let $\varphi:m\mathfrak{H}^\star\to t\mathfrak{H}$ be the linear map defined by 
\begin{align}
\label{eq:map-meta-trio}
    &V\mapsto \mathcal{V}\,,\quad Z\mapsto \mathcal{Z}\,,\quad X\mapsto \mathcal{\oV}\mathcal{Z}\,,\quad Q\mapsto -\zeta\,,\quad \xi\mapsto -\mathcal{C}\,,\quad \eta \mapsto \eta\,.
\end{align}
Equations \eqref{eq:rel-mH-tH-4}--\eqref{eq:rel-mH-tH-2} imply $\varphi[A,B]=[\varphi(A),\varphi(B)]$ for all $A,B\in m\mathfrak{H}^\star$ and we conclude that $\varphi$ is an algebra morphism.
Let $\psi:t\mathfrak{H}\to m\mathfrak{H}^\star$ be the linear  map defined by 
\begin{align}
&\mathcal{V}\mapsto V\,,\quad \mathcal{Z}\mapsto Z\,,\quad \mathcal{\oV}\mapsto XZ^{-1}\,,\quad \zeta\mapsto -Q\,,\quad \mathcal{C}\mapsto -\xi\,,\quad \eta \mapsto \eta\,.
\end{align}
Equations \eqref{eq:rel-trios-meta-1}--\eqref{eq:rel-trios-meta-3} imply that $\psi$ is an algebra morphism. Moreover $\varphi\,\circ\, \psi=\psi\,\circ\,\varphi=1$ and we conclude that $m\mathfrak{H}^\star$ and $t\mathfrak{H}$ are isomorphic. 
\end{proof}

\section{Difference realization of the algebras}
\label{sec:realization}

In this section, we consider the realization of the different algebras introduced previously as  operators acting on $\mathbb{C}_N[x]$, the space of polynomials  in the variable $x$ of degree at most $N$.

Let $\ox$ be the operator multiplication by $x$, and  $T^\pm$ be the shift operators defined by 
\begin{equation}
\ox f(x)=xf(x)\,,\qquad T^\pm f(x)=f(x\pm1)\,,
\end{equation}
where $f\in \mathbb{C}_N[x]$. These operators satisfy the relations
\begin{align}\label{eq:commxT}
    T^{\pm} \ox=(\ox \pm 1)T^{\pm}\,.
\end{align}

\paragraph{Realization of the meta Hahn algebra.}
Using relations \eqref{eq:commxT}, the following operators 
\begin{align}
&V=(\ox+a)(\ox+1-a-N)T^+-\ox(\ox+1)I\,,\\
&X=\ox T^--(\ox+c)I\,,\\
&Z=-T^-\,,
\end{align}
are seen to satisfy the defining relations of the meta Hahn algebra with $\eta=(2c+N)I$, $\xi=(1-a)(a+N)I$. 
The Casimir element in this realization takes the form
\begin{align}
    Q=(a-c)(a+c + N - 1)I\,.
\end{align}
By abuse of notation, we use the same letters for the abstract elements of the algebra and their realization as operators acting on $\mathbb{C}_N[x]$. 
Let us emphasize that, in this realization, the operator $Z$ is invertible, $Z^{-1}=-T^+$, and, by consequence, the previous operators provide also a realization of the algebra $m\mathfrak{H}^\star$.

\paragraph{Realization of the trio Hahn algebra. \label{par:realization-trio}}
Due to the isomorphism given in \eqref{eq:map-meta-trio}, we obtain the following realization of the trio Hahn algebra, starting from that associated with the meta Hahn algebra:
\begin{align}
&V=(\ox+a)(\ox+1-a-N)T^+-\ox(\ox+1)I\,,\\
&\oV=(\ox+c)T^+-\ox I\,,\\
&Z=-T^-\,,\\
&Z^{-1}=-T^+\,.
\end{align}
with $\zeta=(c-a)(a+c+N-1)I$ and $\eta=(2c+N)I$.
The Casimir element of the trio Hahn algebra reduces to a scalar:
\begin{align}
    C=(a-1)(a+N)I\,.
\end{align}
\begin{rem}
A more general realization of the meta and trio Hahn algebra is given as follows:
\begin{align}
&V=(\ox+a)(\ox+b)T^+-\frac14(N+a+b+2\ox+1)(N+a+b+2\ox-1)I\,,\\
&\widetilde V=(\ox+c)T^+-\frac12(2\ox+N+a+b-1)I\,, \\
&X=\frac12\left(2\ox+N +a +b -1\right) T^--\left(c +\ox \right)I\,,\\
&Z=-T^-\,,\\
&Z^{-1}=-T^+\,,
\end{align}
where $\eta=(2c-a-b+1)I,$ $\xi=\frac{1}{4}\left(a-b +1 +N \right) \left(b-a +1+N \right)I$ and $\zeta=(b-c)(a-c)I$.
The Casimir elements of the trio and meta Hahn are the following scalars in this realization: 
\begin{align}
&Q=(a-c)(c-b)I\,,\\
&C=\frac{1}{4}\left(a-b +1 +N \right) \left(a-b -1-N \right)I\,.
\end{align}
We note that, for $b=1-a-N$, the operators reduce to those given previously.
\end{rem}

\paragraph{Eigenvalue problems.} In the realizations given above, we can find the polynomials in $\mathbb{C}_N[x]$ which diagonalize the different operators.

The basis $(\mathbf{a}_n)_{n=0}^N$ of $\mathbb{C}_N[x]$ defined by
\begin{align}
  \mathbf{a}_n(x)=(x+a)_n(x+1-a-N)_{N-n}\,,\label{eq:defn-en}  
\end{align}
diagonalizes the operator $V$, for $n=0,\ldots,N$:
\begin{align}
&V\mathbf{a}_n=(1-a-n)(a+n)\mathbf{a}_n\,.\label{eq:eigVa}
\end{align}
The proof is based on the action of $T^+$ on polynomials and using properties of the Pochhammer symbols. Indeed, the action of $V$ reads:
\begin{align}
  V\mathbf{a}_n(x)=&(x+a)(x+1-a-N) (x+1+a)_n(x+2-a-N)_{N-n}\nonumber\\
  &-x(x+1)(x+a)_n(x+1-a-N)_{N-n}\,.
\end{align}
Then, remarking that $(x+a) (x+1+a)_n=(x+a+n)(x+a)_n$ and $(x+1-a-N)(x+2-a-N)_{N-n}=(x+1-a-n)(x+1-a-N)_{N-n}$, we prove relation \eqref{eq:eigVa}.

Similarly, we can show that the operator $\oV$ is diagonalized by the basis $(\mathbf{b}_n)_{n=0}^N$ of $\mathbb{C}_N[x]$ as follows
\begin{align}
&\oV\mathbf{b}_n=(n+c)\mathbf{b}_n\,,\label{eq:eigoVb}
    \end{align}
 with the following eigenvectors:
 \begin{align}
     \mathbf{b}_n(x)=(x+c)_n\,.\label{eq:defn-bn}
 \end{align}
Finally, the eigenvalue problem for the operator $K_1$ defined in \eqref{eq:defK1K2}
\begin{align}
\label{eq:DefK1realization}
 K_1=X + \rho Z+\frac{1}{2}(\eta+\rho I)  =(\ox-\rho)(T^--1)+\frac{N-\rho}{2}\,, 
\end{align} 
reads
\begin{align}
& K_1\mathbf{c_n}=\left((N-\rho)/2-n\right) \mathbf{c}_n\,,
\end{align}
where the eigenvectors $\mathbf{c}_n$ are defined by
\begin{align}
    \mathbf{c}_n(x)=(\rho-x)_n\,.\label{eq:defn-fn}
\end{align}
%Notice that each of the families $\mathbf{a}_n, \mathbf{b}_n$ and $\mathbf{f}_n$ is a basis of $\sV_N$. 
\begin{rem}
   Recalling that $X=\widetilde VZ$, we obtain the following generalized eigenvalue problem from the eigenvalue problem \eqref{eq:eigoVb}:
\begin{align}
&X\mathbf{d}_n=(n-c)Z\mathbf{d}_n\,,
\end{align}
where the generalized eigenvectors are given by
\begin{align}
&\mathbf{d}_n(x)=Z^{-1}\mathbf{b}_n(x)=-(x+c+1)_n\,. \label{eq:defn-dn}
\end{align}
\end{rem}

\section{Hahn polynomials\label{sec:hahn-polynomials}}

This section recalls how the algebraic setup presented above allows us to derive properties of the Hahn polynomials.
Indeed, as mentioned previously, these polynomials arise in the  overlap coefficients between the eigenbases $\mathbf{a}_k$ of the operator $V$ and the eigenbases $\mathbf{c}_\ell$ of the operator $K_1$, defined in \eqref{eq:defn-en} and \eqref{eq:defn-fn}, respectively. This interpretation of these polynomials allows us to recover their properties algebraically.

\begin{prop}
\label{prop:overlap-formulas-pols}The following connection formulas hold
    \begin{align}
&\mathbf{a}_k=(a+\rho)_k(1-a-N+\rho)_{N-k}\sum_{\ell=0}^N\frac{(-N)_\ell}{\ell!(1-a-N+\rho)_\ell} Q_k(\ell) \mathbf{c}_\ell\,,\label{eq:overlap-ab-Hahn}\\
&\mathbf{c}_\ell=(-1)^N(a+\rho)_\ell\sum_{k=0}^N\frac{(2k+2a-1)(-N)_k}{k!(k+2a-1)_{N+1}}Q_k(\ell)\mathbf{a}_k\,,\label{eq:overlap-ba-Hahn}
\end{align}
where the Hahn polynomials is defined by
\begin{equation}
    Q_k(\ell)={}_{3}F_2 \left({{-k,-\ell,k+2a-1 }\atop
{ a+\rho,-N }}\;\Bigg\vert \; 1\right)\,.
\end{equation}
\end{prop}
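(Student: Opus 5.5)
The plan is to prove \eqref{eq:overlap-ab-Hahn} directly at the level of polynomials. We then deduce \eqref{eq:overlap-ba-Hahn} by inverting the resulting triangular-free (full) change of basis through the orthogonality of Hahn polynomials.

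For the first formula, I would expand $\mathbf{a}_k(x)=(x+a)_k(x+1-a-N)_{N-k}$ in the basis $\mathbf{c}_\ell(x)=(\rho-x)_\ell$. The natural tool is the Chu--Vandermonde identity in the form
\[
(x+\alpha)_m=\sum_{j=0}^m\binom{m}{j}(-1)^j(\rho-x)_j\,(\alpha+\rho+j)_{m-j},
\]
which follows from writing $x+\alpha=(\alpha+\rho)-(\rho-x)$ and using the Pochhammer binomial theorem. Applying it separately to the two factors $(x+a)_k$ and $(x+1-a-N)_{N-k}$ produces a double sum. To collapse it, I would re-expand the product $(\rho-x)_i(\rho-x+i)_{j}$ type terms so that the overall expansion is in the $\mathbf{c}_\ell$. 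A cleaner route is to first write $(x+1-a-N)_{N-k}$ with its variable shifted, $(\rho-x+\text{const})$, so that $\mathbf{a}_k$ becomes a polynomial in $y=\rho-x$ of the form $(\alpha-y)_k(\beta+k-y)_{N-k}$ up to rearrangement. I would then expand $(\alpha-y)_k$ in $(y)_i$ times shifted Pochhammers and absorb the remaining factor using $(y)_i(y+i)_{m}=(y)_{i+m}$-type identities. The coefficient of $\mathbf{c}_\ell$ should be a terminating balanced sum, which one identifies as the stated $_3F_2$ times $\frac{(-N)_\ell}{\ell!(1-a-N+\rho)_\ell}$ and the prefactor $(a+\rho)_k(1-a-N+\rho)_{N-k}$. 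A consistency check is $\ell=0$, i.e.\ evaluation at $x=\rho$: there $\mathbf{c}_\ell(\rho)=\delta_{\ell 0}$ and $\mathbf{a}_k(\rho)=(a+\rho)_k(1-a-N+\rho)_{N-k}$, matching $Q_k(0)=1$.

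An alternative, more in the algebraic spirit of the paper, is the following. One uses that $V$ acts tridiagonally on $\{\mathbf{c}_\ell\}$, computable from the realization via $T^+(\rho-x)_\ell=(\rho-x-1)_\ell$. The eigenvector equation $V\mathbf{a}_k=(1-a-k)(a+k)\mathbf{a}_k$ then yields a three-term recurrence in $\ell$ for the coefficients. Identifying it with the Hahn difference equation in $\ell$, with initial value fixed by $x=\rho$, gives \eqref{eq:overlap-ab-Hahn}. I would use this as the main route if the direct summation becomes messy, since the recurrence plus normalization determines the coefficients uniquely.

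For \eqref{eq:overlap-ba-Hahn}, I would invert the matrix $M_{k\ell}$ of the first formula using the known orthogonality relation of Hahn polynomials in the $k$ variable (dual orthogonality): $\sum_k w_k Q_k(\ell)Q_k(\ell')=\delta_{\ell\ell'}/\pi_\ell$, with $w_k=\frac{(2k+2a-1)(-N)_k(a+\rho)_k\cdots}{k!\cdots}$. The main obstacle is bookkeeping: matching the weights and norms with the prefactors so that the product of the prefactor $(a+\rho)_k(1-a-N+\rho)_{N-k}$ and the weight simplifies to $(-1)^N(a+\rho)_\ell\frac{(2k+2a-1)(-N)_k}{k!(k+2a-1)_{N+1}}$. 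This requires rewriting $(1-a-N+\rho)_{N-k}$ and $(k+2a-1)_{N+1}$ via reflection formulas such as $(\alpha)_{N-k}=(-1)^k(\alpha)_N/(1-\alpha-N)_k$. I expect this normalization matching, rather than the structural argument, to be the delicate step.
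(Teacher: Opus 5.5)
Your proof is correct in substance, but it takes a different route from the paper. The paper never expands $\mathbf{a}_k$ directly in the $\mathbf{c}_\ell$. Instead it inserts the split basis $\mathbf{s}_n(x)=(x+a)_n$, on which $V$ and $K_1$ act bidiagonally in opposite directions, so each transition matrix factors into two triangular ones obtained from two-term recurrences. Your Chu--Vandermonde identity is exactly the $\mathbf{s}_i\to\mathbf{c}_\ell$ factor. The second formula is obtained by the same two-step factorization through $\mathbf{s}_n$, not by appealing to orthogonality.

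\emph{First formula.} Of your two options, the recurrence argument is the one that closes. Write $m_\ell=B_\ell q_\ell$ with $B_\ell=\frac{(-N)_\ell}{\ell!(1-a-N+\rho)_\ell}$. The tridiagonal action of $V$ on $\mathbf{c}_\ell$ then becomes exactly $(\ell+a+\rho)(\ell-N)(q_{\ell+1}-q_\ell)+\ell(\ell+\rho-a-N)(q_{\ell-1}-q_\ell)=k(k+2a-1)q_\ell$. This is the Hahn difference equation with $\alpha+1=a+\rho$, $\beta+1=a-\rho$. The value $m_0=\mathbf{a}_k(\rho)$ fixes the solution provided $(\ell+a+\rho-1)(\ell+\rho-a-N)\neq0$ for $1\le\ell\le N$.

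The direct double expansion, by contrast, gives (for $\ell\le N-k$) a multiple of ${}_3F_2(-k,-\ell,1-a+\rho-k;\,a+\rho,\,N-k-\ell+1;\,1)$. This is \emph{not} balanced (its excess is $2a+N+k$), and it degenerates for $\ell>N-k$. So that branch needs Sheppard-type transformations, not a Saalsch\"utz evaluation, and your remark that it ``should be a balanced sum'' is inaccurate.

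\emph{Second formula.} Your inversion via dual orthogonality checks out. Set $A_k=(a+\rho)_k(1-a-N+\rho)_{N-k}$ and $\pi_\ell=\binom{a+\rho-1+\ell}{\ell}\binom{a-\rho-1+N-\ell}{N-\ell}$. Then $\pi_\ell/B_\ell=(a+\rho)_\ell(a-\rho)_N/N!$ and $w_k/A_k=(-1)^N(2k+2a-1)(-N)_k\,N!/\bigl(k!\,(k+2a-1)_{N+1}(a-\rho)_N\bigr)$. Their product is precisely the stated coefficient.

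\emph{Trade-off.} Your route imports two classical facts, the Hahn difference equation and the dual orthogonality relation. In return it avoids any explicit ${}_3F_2$ manipulation; the paper's ``combining these results'' also hides a hypergeometric transformation. The paper's factorization, however, is self-contained. That is what lets the paper afterwards \emph{derive} orthogonality, the recurrence and the difference equation from these connection formulas, and those derivations would become circular if built on your proof.
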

\proof Let introduce the following basis of $\mathbb{C}_N[x]$:
\begin{align}
\label{eq:basis-sn}
    \mathbf{s}_n(x)=(x+a)_n\,.
\end{align}
We can show that the operators $K_1$ and $V$ act bidiagonally in this basis:
\begin{align}
    K_1\mathbf{s}_n&=((N-\rho)/2-n)\mathbf{s}_n+n(n-1+a+\rho)\mathbf{s}_{n-1}\,,\\
    V\mathbf{s}_n &=(1-a-n)(n+a) \mathbf{s}_n+(n-N) \mathbf{s}_{n+1}\,.\label{eq:actVs}
\end{align}
The diagonalization of these operators is heance very easy
\begin{align}
     \mathbf{a}_k&=\sum_{i=k}^N (-1)^{i+N}\binom{N-k}{i-k}(k+2a+i)_{N-i}\ \mathbf{s}_i\,,\\
     \mathbf{s}_i&=\sum_{\ell=0}^i (-1)^\ell\binom{i}{\ell}(\ell+a+\rho)_{i-\ell}\ \mathbf{c}_\ell\,.
\end{align}
Combining these results, we prove \eqref{eq:overlap-ab-Hahn}. The proof of relation \eqref{eq:overlap-ba-Hahn} follows the same steps.
\endproof
The basis $\mathbf{s}_n$ introduced in the proof corresponds to the basis denoted by $|N-n\rangle$ in \cite{Tsujimoto_Vinet_Zhedanov_2025} (with their parameters modified as follows: $a_n\to N-n,\ \alpha\to N+c,\ \beta\to a+N-1$), it is also the split basis introduced in \cite{TERWILLIGER2001} in the context of Leonard pair.

Let us mention that the previous proposition can also be demonstrated taking limits of the formulas proven in \cite{Rosengren} for the Racah polynomials.

\paragraph{Orthogonality.}
Equations \eqref{eq:overlap-ab-Hahn} and \eqref{eq:overlap-ba-Hahn} allow us to recover the orthogonality relation of the Hahn polynomials:
\begin{align}
&\sum_{\ell=0}^N\binom{a+\rho-1+\ell}{\ell}\binom{a+N-1-\rho-\ell}{N-\ell} Q_m(\ell)Q_n(\ell)\nonumber\\
&=\frac{(-1)^nn!(n+2a-1)_{N+1}(a-\rho)_n }{(2n+2a-1)(a+\rho)_n(-N)_nN!}\delta_{m,n}\,.
\end{align}

\paragraph{Three-term recurrence relation and difference equation.}
Using the definitions of $K_1$ and $\mathbf{a}_k$ given in \eqref{eq:DefK1realization} and \eqref{eq:eigVa}, respectively, one can show that the action of $K_1$ on the basis $\mathbf{a}_k$ is given by
\begin{align}
K_1\mathbf{a}_k&=-\frac{\left(k+2 a  -1\right) \left(N -k \right) \left(k-\rho +a  \right)}{2 \left(k+a  \right) \left(2k+2 a -1\right)}(\mathbf{a}_{k+1}-\mathbf{a}_k)\nonumber\\&-\frac12(N+\rho)\mathbf{a}_k
-\frac{k\left(k+a +\rho -1\right) \left(k+2 a +N  -1\right) }{2 \left(k +a -1\right) \left(2k+2 a  -1\right)}(\mathbf{a}_{k-1}-\mathbf{a}_k)\,.
\end{align}
Similarly, the action of $V$ on the basis $\mathbf{c}_\ell$ is given by 
\begin{align}
V\mathbf{c}_\ell&=(N-\ell)\mathbf{c}_{\ell+1}-\ell(\ell+a+\rho-1)(\ell+\rho-a-N)\mathbf{c}_{\ell-1}\nonumber\\
&+((\ell-N)(\ell+a+\rho)+\ell(\ell+\rho-a-N)-a(a-1))\mathbf{c}_\ell\,.
\end{align}
These formulas prove that $(V,K_1)$ forms a Leonard pair.

Combining these actions with \eqref{eq:overlap-ab-Hahn} and \eqref{eq:overlap-ba-Hahn}, one obtains the three-term recurrence relation and the difference equation for the Hahn polynomials \cite{Koekoek}. 

\section{Rational functions of Hahn type\label{sec:rational-functions}}
The Hahn rational functions 
\begin{align}
&U_k(\ell;a,c,N)={}_{3}F_2 \left({{-k,-\ell,k+2a-1}\atop
{ a-c-\ell,-N }}\;\Bigg\vert \; 1\right)\,\,,
\end{align}
are proportional to the overlap coefficients between the eigenbasis $\mathbf{a}_k$ and $\mathbf{b}_\ell$ defined in \eqref{eq:defn-en} and \eqref{eq:defn-bn}.
\begin{prop}
The following connection formulas hold
\begin{align}
&\mathbf{a}_k=\sum_{\ell=0}^N\binom{N}{\ell}(1-a-c-N)_{N-\ell}U_k(N-\ell,a,-c-N)\mathbf{b}_\ell\,,\label{eq:ak-sum-bl-rat-2}\\
&\mathbf{b}_\ell=(-1)^N\frac{(c-a)_\ell}{(2a)_N}\sum_{k=0}^N\frac{(1-2a-2k)}{(1-2a)}\frac{(-N,2a-1)_k}{k!(2a+N)_k} U_k(\ell;a,c-1,N)\mathbf{a}_k\label{eq:bl-sum-ak-rat}\,.
\end{align}
\end{prop}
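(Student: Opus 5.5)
We follow the same route as in the proof of Proposition~\ref{prop:overlap-formulas-pols}: go through the intermediate split basis $\mathbf{s}_n(x)=(x+a)_n$ of \eqref{eq:basis-sn}, on which $V$ acts bidiagonally by \eqref{eq:actVs}. We already have the explicit triangular expansion of $\mathbf{a}_k$ in the $\mathbf{s}_i$,
\begin{align*}
\mathbf{a}_k=\sum_{i=k}^N (-1)^{i+N}\binom{N-k}{i-k}(k+2a+i)_{N-i}\,\mathbf{s}_i .
\end{align*}
So the new ingredient is the connection between $\mathbf{s}_i$ and $\mathbf{b}_\ell(x)=(x+c)_\ell$. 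This is a classical shifted-factorial (Vandermonde/Chu) identity:
\begin{align*}
(x+a)_i=\sum_{\ell=0}^i\binom{i}{\ell}(a-c)_{i-\ell}\,(x+c+\ell)_{i-\ell}\cdots
\end{align*}
That form is not quite triangular in the $\mathbf{b}_\ell$, so we use the cleaner identity $(x+a)_i=\sum_{\ell}\binom{i}{\ell}(a-c)_{i-\ell}(x+c)_\ell$. This holds exactly because $(y+z)_i=\sum_\ell\binom{i}{\ell}(y)_\ell(z)_{i-\ell}$ with $y=x+c$ and $z=a-c$. Alternatively, one can derive it by letting $\oV$, which acts on $\mathbf{s}_n$ by a lower bidiagonal matrix, act and then diagonalizing as in the Hahn case.

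\textbf{Proof of \eqref{eq:ak-sum-bl-rat-2}.} Substituting the Chu--Vandermonde expansion into the expansion of $\mathbf{a}_k$ gives
\begin{align*}
\mathbf{a}_k=\sum_{\ell=0}^N\Big[\sum_{i=\max(k,\ell)}^N(-1)^{i+N}\binom{N-k}{i-k}(k+2a+i)_{N-i}\binom{i}{\ell}(a-c)_{i-\ell}\Big]\mathbf{b}_\ell .
\end{align*}
Then reindex by $j=N-i$ and rewrite every factor as Pochhammer symbols in $j$: $\binom{N-k}{N-j-k}$, $(k+2a+N-j)_j$, $\binom{N-j}{\ell}$ and $(a-c)_{N-j-\ell}$. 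The inner sum should then collapse to a terminating ${}_3F_2(1)$ in $j$. It will contain the parameters $-(N-k)$ and $-(N-\ell)$ together with a denominator parameter of the form $a-c'-(N-\ell)$, which is exactly the signature of $U_k(N-\ell;a,-c-N)$ with $c'=-c-N$. This is not immediately of the displayed form with $-k,\ k+2a-1$ upstairs, so we apply a ${}_3F_2(1)$ transformation (a Sheppard/Whipple-type relation between terminating balanced-or-not ${}_3F_2$'s) to move to parameters $-k,\ k+2a-1,\ -(N-\ell)$ over $a+c+N-(N-\ell)$ and $-N$. The prefactor $\binom{N}{\ell}(1-a-c-N)_{N-\ell}$ should come out of this transformation.

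\textbf{Proof of \eqref{eq:bl-sum-ak-rat}, and the expected obstacle.} The second formula can be obtained in the same way by inverting the two triangular systems: first $\mathbf{b}_\ell$ in terms of $\mathbf{s}_i$, again by Chu--Vandermonde with $a,c$ swapped, then $\mathbf{s}_i$ in terms of $\mathbf{a}_k$, using the inverse of the upper bidiagonal action \eqref{eq:actVs}. A more robust alternative is to obtain it from the first formula by a biorthogonality argument in the spirit of the orthogonality of the Hahn polynomials. Either way, the main obstacle is the hypergeometric bookkeeping: identifying the exact ${}_3F_2$ transformation that turns the naive double sum into $U_k$ with its $\ell$-dependent denominator parameter, and tracking the signs and normalizations. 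A cheap consistency check we would use is small cases, $N=1,2$, and the special values $k=0$ and $\ell=0$, where both sides reduce to Chu--Vandermonde.
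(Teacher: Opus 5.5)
Your proposal follows the paper's proof. You expand $\mathbf{a}_k$ in the split basis $\mathbf{s}_i=(x+a)_i$, pass to $\mathbf{b}_\ell$ via $(x+a)_i=\sum_\ell\binom{i}{\ell}(a-c)_{i-\ell}(x+c)_\ell$, and invert both triangular systems for the second formula. You also rightly anticipate that, with this split basis, the first formula needs an extra ${}_3F_2$ step that the paper leaves implicit.

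The remaining step is easier than you fear. The raw coefficient of $\mathbf{b}_\ell$ is $\binom{N}{\ell}(a-c)_{N-\ell}\,{}_3F_2(k-N,1-k-2a-N,\ell-N;\,-N,1-a+c-N+\ell;\,1)$. Contrary to your guess, its lower parameter is not yet that of $U_k(N-\ell;a,-c-N)$. The substitution $a\mapsto 1-a-N$ sends $\mathbf{a}_{N-k}$ to $\mathbf{a}_k$ and fixes $\mathbf{b}_\ell$, so it turns this raw identity into the first formula verbatim, with no named transformation needed. For the second formula, the inverse of the bidiagonal action of $V$ on $\mathbf{s}_i$ has a product form, and the sum over $i$ comes out directly as $U_k(\ell;a,c-1,N)$ times the stated prefactor. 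So the second formula is the easy case, not the obstacle.
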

\proof 
It is known that Pochhammer symbols are polynomials sequences of binomial type \cite{Rota}:
\begin{equation}
(x+y)_i=\sum_{\ell=0}^i\binom{i}{\ell}(x)_\ell(y)_{i-\ell}\,.
\end{equation}
Applying this identity, we obtain the connection formulas between the bases ${\mathbf{b}_\ell}$ and ${\mathbf{s}_i}$ in \eqref{eq:defn-bn} and \eqref{eq:basis-sn}, respectively
\begin{align}
&\mathbf{s}_i=\sum_{\ell=0}^i\binom{i}{\ell}(a-c)_{i-\ell}\mathbf{b}_\ell\,,\label{eq:overlap-sb}\\
&\mathbf{b}_\ell=\sum_{i=0}^\ell\binom{\ell}{i}(c-a)_{\ell-i}\mathbf{s}_i\,.
\end{align}
Additionally, we have given in the proof of Proposition \ref{prop:overlap-formulas-pols} the connection formula
\begin{align}
 \mathbf{a}_k&=\sum_{i=k}^N (-1)^{i+N}\binom{N-k}{i-k}(k+2a+i)_{N-i}\ \mathbf{s}_i\,.\label{eq:overlap-as}
\end{align}
Combining \eqref{eq:overlap-sb} and \eqref{eq:overlap-as}, we derive \eqref{eq:ak-sum-bl-rat-2}. The proof of \eqref{eq:bl-sum-ak-rat} follows the same steps.
\endproof

\paragraph{Biorthogonality} As a consequence of \eqref{eq:ak-sum-bl-rat-2} and \eqref{eq:bl-sum-ak-rat} we derive the biorthogonality of the Hahn rational functions: 
\begin{align}
    &\sum_{\ell=0}^N\mathcal{W}(\ell) U_s(\ell;a,c-1,N)U_k(N-\ell,a,-c-N)=h_k\delta_{k,s}\,,\\
    &\sum_{k=0}^N\frac{1}{h_k}U_k(\ell;a,c-1,N)U_k(N-\ell',a,-c-N)=\frac{\delta_{\ell,\ell'}}{\mathcal{W}(\ell)}
\end{align}
\begin{align}
&\mathcal{W}(\ell)=\binom{N}{\ell}(1-a-c-N)_{N-\ell}(c-a)_\ell\,,\\
&h_k=(-1)^N\frac{(1-2a)(2a)_N}{(1-2a-2k)}\frac{k!(2a+N)_k}{(-N,2a-1)_k}\,.
\end{align}

\subsection{Leonard trio\label{ssec:LT}}
The goal of this subsection is to indicate how the present work can be cast in the algebraic setting of Leonard trios introduced in \cite{LT2026}. Before doing so we recall their definition  (see \cite[Def. 2.2]{LT2026}). We denote by $\V$ a  vector space of finite dimension $N+1$ over the complex field $\mathbb{C}$, and $\mathrm{End}(\V)$ denotes the vector space over $\mathbb{C}$  of the endomorphisms of $\V$.
\begin{defi}\label{def:LT}
 A Leonard trio (LT) is an ordered triplet $(V,\oV,Z)$ of elements of $\mathrm{End}(\V)$ that satisfies the following properties:
   \begin{itemize}
       \item[(i)] There exists a basis of $\V$ with respect to which the matrix representing $V$ is diagonal and multiplicity free, and the matrices representing  $Z$ and $\oV Z$ are  tridiagonal.
       \item[(ii)] There exists a basis of $\V$ with respect to which the matrix representing $\oV$ is diagonal and multiplicity free, and the matrices representing  $Z$ and $Z V$ are  tridiagonal.
   \end{itemize}
\end{defi}
Next, we show that the triplet $(V,\oV,Z)$, with $V$, $\oV$, and $Z$ given in Paragraph \ref{par:realization-trio}, is a Leonard trio.\begin{prop}
The triplet $(V,\widetilde{V},Z)$ is a Leonard trio.
\end{prop}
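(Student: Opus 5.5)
We verify the two conditions of Definition~\ref{def:LT} for the realization of Paragraph~\ref{par:realization-trio}. Condition (i) uses the basis $(\mathbf{a}_n)_{n=0}^N$ of \eqref{eq:defn-en}, and condition (ii) uses the basis $(\mathbf{b}_n)_{n=0}^N$ of \eqref{eq:defn-bn}. Throughout we take the parameters $a,c$ generic, so that no accidental degeneracies occur.

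\emph{Diagonality and multiplicity-freeness.} By \eqref{eq:eigVa}, $V$ is diagonal in the basis $(\mathbf{a}_n)$ with eigenvalues $\lambda_n=(a+n)(1-a-n)$. These satisfy
\[
\lambda_m-\lambda_n=(n-m)(m+n+2a-1),
\]
so they are pairwise distinct as long as $2a\notin\{2-2N,\dots,1\}\cap\mathbb{Z}$. By \eqref{eq:eigoVb}, $\oV$ is diagonal in the basis $(\mathbf{b}_n)$ with eigenvalues $n+c$, which are always distinct.

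\emph{Condition (i).} We argue through the algebra rather than by direct expansion. From \eqref{eq:rel-meta2} and \eqref{eq:rel-meta3}, with $X=\oV Z$,
\[
[V,[V,Z]]=2[V,X]=-2\bigl(\{V,Z\}+V+\xi\bigr).
\]
Take the matrix entry $(m,n)$ with $m\neq n$ in the eigenbasis of $V$. The terms $V$ and $\xi$ contribute only on the diagonal, so we obtain
\[
Z_{mn}\bigl[(\lambda_m-\lambda_n)^2+2\lambda_m+2\lambda_n\bigr]=0 .
\]
Set $s=a+m$ and $t=a+n$, so that $\lambda_m=s(1-s)$ and $\lambda_n=t(1-t)$. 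The bracket becomes
\[
(s-t)^2(1-s-t)^2+2s(1-s)+2t(1-t).
\]
For $t=s\pm1$ this vanishes; I checked the case $t=s-1$ directly, where it equals $4(1-s)^2-4(s-1)^2=0$. Hence the bracket is divisible by $(s-t)^2-1$. I expect the cofactor to be $(s+t)(2-s-t)$, up to sign; this should be checked by expanding. For generic $a$ the cofactor does not vanish when $|m-n|\ge2$, so $Z_{mn}=0$ for $|m-n|\ge 2$, i.e.\ $Z$ is tridiagonal. Then
\[
\oV Z=X=\tfrac12\bigl([V,Z]-\eta\bigr)
\]
is tridiagonal as well, since $[V,Z]$ has entries $(\lambda_m-\lambda_n)Z_{mn}$.

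\emph{Condition (ii).} Here a direct computation with Pochhammer symbols is easiest. For $Z=-T^-$ we have
\[
Z\mathbf{b}_n=-(x+c-1)_n=-(x+c-1)(x+c)_{n-1}=-\mathbf{b}_n+n\,\mathbf{b}_{n-1},
\]
so $Z$ is lower bidiagonal, and in particular tridiagonal. For $ZV$ we compute
\[
ZV\mathbf{b}_n=-T^-V\mathbf{b}_n=-(x+c)_{n-1}\Bigl[(x+a-1)(x-a-N)(x+c+n-1)-x(x-1)(x+c-1)\Bigr].
\]
The cubic terms in the bracket cancel, so the bracket is a quadratic polynomial in $x$. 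Write this quadratic as
\[
\alpha\,(x+c+n-1)(x+c+n)+\beta\,(x+c+n-1)+\gamma .
\]
Multiplying by $(x+c)_{n-1}$ turns the three terms into $\mathbf{b}_{n+1}$, $\mathbf{b}_n$ and $\mathbf{b}_{n-1}$ respectively. Hence $ZV$ is tridiagonal in the basis $(\mathbf{b}_n)$.

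The main obstacle is the factorization step in (i): we must confirm that the bracket factors as $\bigl((s-t)^2-1\bigr)$ times a cofactor that does not vanish for generic $a$. If this becomes messy, the fallback is a direct expansion of $T^-\mathbf{a}_n$ in the basis $(\mathbf{a}_k)$, going through the split basis $(\mathbf{s}_i)$ and using \eqref{eq:actVs}.
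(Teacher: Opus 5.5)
Your proof is correct. For condition (ii) it is essentially the paper's argument. The paper records the explicit expansions $Z\mathbf{b}_\ell=-\mathbf{b}_\ell+\ell\,\mathbf{b}_{\ell-1}$ and the three-term formula for $ZV\mathbf{b}_\ell$, while you only exhibit the structure. The edge cases are harmless: your leading coefficient is $\alpha=n-N$, which reproduces the paper's $N-\ell$ and shows that no $\mathbf{b}_{N+1}$ appears. At $n=0$ the bracket carries the factor $x+c-1$, so $\gamma=0$.

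For condition (i) your route is genuinely different. The paper writes down $Z\mathbf{a}_k$ and $\oV Z\mathbf{a}_k$ explicitly in the basis $(\mathbf{a}_k)$ and says ``one checks''. You instead deduce tridiagonality from the relation $[V,[V,Z]]=-2\bigl(\{V,Z\}+V+\xi\bigr)$ of $m\mathfrak{H}$, in the spirit of the tridiagonal relations for Leonard pairs.

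The step you left open holds. With $u=s-t$ and $v=s+t$, the bracket equals $u^2(1-v)^2+2v-u^2-v^2=(v^2-2v)(u^2-1)$, that is, $(s+t)(s+t-2)\bigl((s-t)^2-1\bigr)$. So your guessed cofactor is right up to sign, and $Z_{mn}=0$ for $|m-n|\ge 2$ as soon as $2a+m+n\notin\{0,2\}$. The values of $2a$ this excludes are $-(m+n)$ and $2-(m+n)$ with $2\le m+n\le 2N-2$. They all lie in $\{2-2N,\dots,0\}$, which is exactly the set you must avoid anyway for $V$ to be multiplicity-free. The value $1$ in your list is not needed, since $\lambda_m=\lambda_n$ with $m\neq n$ forces $2a=1-m-n\le 0$.

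Each approach buys something different. Yours explains why tridiagonality of $Z$ and $X=\oV Z$ is forced by the defining relations alone. It therefore holds in any finite-dimensional representation of $m\mathfrak{H}^\star$ with scalar $\eta,\xi$ and this multiplicity-free spectrum of $V$, and it avoids expanding $T^-\mathbf{a}_k$. The paper's direct computation gives the actual matrix entries of $Z$ and $\oV Z$ in the basis $(\mathbf{a}_k)$. Those entries are what is used afterwards to derive the generalized eigenvalue problems satisfied by the Hahn rational functions.
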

\begin{proof}
As shown in \eqref{eq:eigVa}, the operator $V$ is diagonal in the basis $\mathbf{a}_k$. One checks that $Z$ and $\widetilde{V}Z$ are tridiagonal with respect to the basis ${\mathbf{a}_k}$:
\begin{align}
&Z\mathbf{a}_k=\frac{\left(N -k \right) \left(2 a +k -1\right)}{2 \left(k+a \right) \left(2 a +2 k -1\right)}
(\mathbf{a}_{k+1}-\mathbf{a}_k)-\mathbf{a}_k\nonumber\\
&\qquad -\frac{k \left(2 a +N +k -1\right)}{2 \left(k +a -1\right) \left(2 a +2 k -1\right)}
(\mathbf{a}_{k-1}-\mathbf{a}_k)\,,\label{eq:Z-ak}\\
&\widetilde{V}Z\mathbf{a}_k=-\frac{\left(N -k \right) \left(2 a +k -1\right)}{2(2 a +2 k -1)}(\mathbf{a}_{k+1}-\mathbf{a}_k)-(N+c)\mathbf{a}_k\nonumber\\
&\qquad \quad-\frac{k \left(2 a +N +k -1\right)}{2 \left(2 a +2 k -1\right)}(\mathbf{a}_{k-1}-\mathbf{a}_k)\label{eq:VtZ-ak}\,.
\end{align}
As shown in \eqref{eq:eigoVb}, the operator $\widetilde{V}$ is diagonal in the basis ${\mathbf{b}_\ell}$. One checks that $Z$ and $ZV$ are tridiagonal with respect to the basis $\mathbf{b}_\ell$:
\begin{align}
&Z\mathbf{b}_\ell=-\mathbf{b}_\ell+\ell \mathbf{b}_{\ell-1}\label{eq:Z-bl}\,,\\
&ZV\mathbf{b}_\ell=(N-l)\mathbf{b}_{\ell+1}-\ell(\ell+c-1)(\ell+c)\mathbf{b}_{\ell-1}\nonumber\\&\qquad\quad+(\ell(\ell+c)+(\ell-N)(\ell+c)+(a-1)(a+N))
\mathbf{b}_\ell\,.\label{eq:ZV-bl}
\end{align}
We conclude that the triplet $(V, \widetilde{V},Z)$ is a Leonard trio.
\end{proof}
Combining \eqref{eq:Z-ak}, \eqref{eq:VtZ-ak}, \eqref{eq:Z-bl}, and \eqref{eq:ZV-bl} with the connection formulas \eqref{eq:ak-sum-bl-rat-2} and \eqref{eq:bl-sum-ak-rat}, one obtains the generalized eigenvalue problems for the Hahn rational functions \cite{Tsujimoto_Vinet_Zhedanov_2025}.

\section{Outlook \label{sec:out}}
In this work we have completed the algebraic description of the Hahn type Leonard trios by explicitly determining the algebra generated by the operators $V$, $\widetilde V$, and $Z$ that define it. Complementing this set of operators with $Z^{-1}$ we could point relations that consistently define what we have called the trio Hahn algebra. 

A central result of this paper is the proof that the trio Hahn algebra is isomorphic to the previously introduced meta Hahn algebra. This establishes a precise algebraic equivalence between two constructions that are a priori different, thereby enriching the perspectives of the meta algebra setting and highlighting the role played by Leonard trios in organizing the bispectral properties of the associated special functions.

The realization in terms of difference operators further connects the abstract algebraic relations with explicit analytic models, allowing us to naturally recover the Hahn polynomials and biorthogonal rational functions.

The results of this paper support the view that together with meta algebras, Leonard trios provide an effective framework for encoding the bispectral properties of hypergeometric biorthogonal rational functons. They set the directions for the development of the program across other families within the meta algebra hierarchy. One natural direction for future work is to extend this analysis to other families, such as the $q$-Hahn \cite{bernard2024meta}, Racah \cite{MetaRacah,GM}, and Wilson \cite{wilson1991orthogonal}. We expect that this approach will lead to a systematic understanding the algebraic structures within the meta algebra hierarchy.

\vspace{1cm}
\paragraph{\textbf{Acknowledgements:}}
L.~Vinet is funded in part by a Discovery Grant from the Natural Sciences and Engineering Research Council (NSERC) of Canada. Q.~Labriet and L.~Morey enjoy postdoctoral fellowships provided by this grant. The work of L. Morey was supported by a mobility grant Québec-France Sophie Germain (https://doi.org/10.69777/381781) and thanks LAPTh for its hospitality.
\bibliographystyle{unsrt} 
%\bibliographystyle{unsrtinur} 
% formatting style : order of apparition numbering, number citations, 
% author initials, arXiv clickable URLs
\bibliography{ref_rosHahn.bib} 
\end{document}